\newcommand{\fundef}[2]{#1 & = & \expr{#2}}
\newcommand{\expr}[1]{#1{}}
\newcommand{\var}[2]{\mathit{#1}#2}
\newcommand{\abs}[3]{\lambda #1{.#2{#3}}}
\newcommand{\app}[3]{#1{~#2{#3}}}
\newcommand{\args}[3]{#1{\ldots#2{#3}}}
\newcommand{\cas}[6]{\!\begin{array}[t]{llcl}\multicolumn{4}{l}{{\bf case} ~ #1{~{\bf of}}}\\& #2{} & \Rightarrow & #3{}\\~~~~ & #4{} & \Rightarrow & #5{#6}\end{array}}
\newcommand{\Cas}[6]{{\bf case} ~ #1{~ {\bf of} ~ #2{~ \Rightarrow #3{~; ~ #4{~ \Rightarrow #5{#6}}}}}}
\newcommand{\casedots}[6]{{\bf case}~#1{~{\bf of}~#2{\Rightarrow #3{\ldots#4{\Rightarrow #5{#6}}}}}}
\newcommand{\where}[3]{\begin{array}[t]{lcl}\multicolumn{3}{l}{#1{}}\\\multicolumn{3}{l}{{\bf where}}\\#2{#3}\end{array}}
\newcommand{\wheredots}[6]{#1{} ~ {\bf where} ~ #2 = #3 \ldots #4 = #5{#6}}
\newcommand{\Letexp}[4]{{\bf let} ~ #1{ =  #2{~ {\bf in} ~ #3{#4}}}}
\newcommand{\multiletexp}[6]{{\bf let} ~ #1{= #2{\ldots #3{= #4{~ {\bf in}~ #5{#6}}}}}}
\newcommand{\brackets}[2]{(#1{)#2}}
\newcommand{\evalprog}[3]{{\mathcal N}_p[\![#1{]\!]~#2{#3}}}
\newcommand{\eval}[4]{{\mathcal N}_e[\![#1{]\!]~#2{~#3{#4}}}}
\newcommand{\subst}[4]{#1{\{#3{ \mapsto #2{\}#4}}}}
\newcommand{\place}[3]{#2{\bullet #1{#3}}}
\newcommand{\ignore}[1]{}
\newcommand{\transformprog}[4]{{\mathcal T}_p^{#1}{[\![#2{]\!]~#3{#4}}}}
\newcommand{\transform}[7]{{\mathcal T}_e^{#1}{[\![#2{]\!]~#3{~#4{~#5{~#6{#7}}}}}}}
\newcommand{\transformcon}[7]{{\mathcal T}_\kappa^{#1}{[\![#2{]\!]~#3{~#4{~#5{~#6{#7}}}}}}}
\newcommand{\resprog}[3]{{\mathcal R}_p[\![#1{]\!]~#2{#3}}}
\newcommand{\resexp}[4]{{\mathcal R}_e[\![#1{]\!]~#2{~#3{#4}}}}
\title{The Next 700 Program Transformers}
\author{G.W. Hamilton\orcidID{000-0001-5954-6444}}
\institute{School of Computing, Dublin City University, Dublin, Ireland 
\email{geoffrey.hamilton@dcu.ie}}
\begin{document}

\maketitle

\begin{abstract}
In this paper, we describe a hierarchy of program transformers in which the transformer at each level of the hierarchy builds on top of those at lower levels. 
The program transformer at level 1 of the hierarchy corresponds to positive supercompilation, and that at level 2 corresponds to distillation. 
We prove that the transformers at each level terminate. We then consider the speedups that can be obtained at each level in the hierarchy, 
and try to characterise the improvements that can be made. 
\end{abstract}

\keywords{transformation hierarchy \and supercompilation \and distillation \and speedups}

\section{Introduction}

It is well known that programs written using functional programming languages often make use of intermediate data structures and thus can be inefficient. 
Several program transformation techniques have been proposed to eliminate some of these intermediate data structures; for example {\em partial evaluation} 
\cite{JONES93}, {\em deforestation} \cite{WADLER88} and {\em supercompilation} \cite{TURCHIN86}. {\em Positive supercompilation} \cite{SORENSEN96} 
is a variant of Turchin's supercompilation \cite{TURCHIN86} that was introduced in an attempt to study and explain the essentials of Turchin's supercompiler. Although  strictly 
more powerful than both partial evaluation and deforestation, S{\o}rensen has shown that positive supercompilation (without the identification of common 
sub-expressions in generalisation), and hence also partial evaluation and deforestation, can only produce a linear speedup in programs \cite{SORENSEN94A}.
Even with the identification of common sub-expressions in generalisation, superlinear speedups are obtained for very few interesting programs, and many obvious 
improvements cannot be made without the use of so-called `eureka' steps \cite{BURSTALL77}.
\begin{example}
\normalfont{Consider the function call $nrev~xs$ shown in Fig. \ref{example1}.
\begin{figure}[htb]
\begin{center}
\begin{tabular}{l}
$\expr{\where{\app{\var{nrev}}{\var{xs}}}
{\fundef{\app{\var{nrev}}{\var{xs}}}{\cas{\var{xs}}{\var{Nil}}{\var{Nil}}{\app{\app{\var{Cons}}{\var{x'}}}{\var{xs'}}}{\app{\app{\var{append}}{\brackets{\app{\var{nrev}}{\var{xs'}}}}}{\brackets{\app{\app{\var{Cons}}{\var{x'}}}{\var{Nil}}}}}} \\
\fundef{\app{\app{\var{append}}{\var{xs}}}{\var{ys}}}{\cas{\var{xs}}{\var{\var{Nil}}}{\var{ys}}{\app{\app{\var{Cons}}{\var{x'}}}{\var{xs'}}}{\app{\app{\var{Cons}}{\var{x'}}}{\brackets{\app{\app{\var{append}}{\var{xs'}}}{\var{ys}}}}}}}}$ \\
\\ 
$\expr{\where{\app{\var{qrev}}{\var{xs}}}
{\fundef{\app{\var{qrev}}{\var{xs}}}{\app{\app{\var{qrev'}}{\var{xs}}}{\var{Nil}}} \\
\fundef{\app{\app{\var{qrev'}}{\var{xs}}}{\var{ys}}}{\cas{\var{xs}}{\var{Nil}}{\var{ys}}{\app{\app{\var{Cons}}{\var{x'}}}{\var{xs'}}}{\app{\app{\var{qrev'}}{\var{xs'}}}{\brackets{\app{\app{\var{Cons}}{\var{x'}}}{\var{ys}}}}}}}}$
\end{tabular}
\end{center}
\caption{Alternative Definitions of List Reversal}
\label{example1}
\end{figure} 
This reverses the list $xs$, but the recursive function call $(nrev~xs')$ is an intermediate data structure, so 
in terms of time and space usage, it is quadratic with respect to the length of the list $xs$. A more efficient function 
that is linear with respect to the length of the list $xs$ is the function $qrev$ shown in Fig. \ref{example1}.

A number of algebraic transformations have been proposed that can perform this transformation
(e.g. \cite{WADLER87D}), making essential use of eureka steps requiring human insight and not easy to automate;
for the given example this can be achieved by appealing to a specific law stating the associativity of the $append$ function. 
However, none of the generic program transformation techniques mentioned above are capable of 
performing this transformation.}
\end{example}
The {\em distillation} algorithm \cite{HAMILTON07A,HAMILTON12A} was originally motivated by the need for automatic techniques that avoid the reliance 
on eureka steps to perform transformations such as the above. In positive supercompilation, generalisation and folding are performed only on expressions, 
while in distillation, generalisation and folding are also performed on recursive function representations ({\em process trees}). This allows a number of 
improvements to be obtained using distillation that cannot be obtained using positive supercompilation. 

The process trees that are generalised and folded in 
distillation are in fact those produced by positive supercompilation, so we can see that the definition of distillation is built on top of positive supercompilation. 
This suggests the existence of a hierarchy of program transformers, where the transformer at each level is built on top of those at lower levels, and more 
powerful transformations are obtained as we move up through this hierarchy. In this paper, we define such a hierarchy inductively, with positive supercompilation 
at level 1, distillation at level 2 and each new level defined in terms of the previous ones. Each of the transformers is capable of performing {\em fusion} to
eliminate intermediate data structures by fusing nested function calls. As we move up through the hierarchy, deeper nestings of function calls can be fused,
thus removing more intermediate data structures.

The remainder of this paper is structured as follows. In Section 2, we define the higher-order functional language on which the described transformations are performed. 
In Section 3, we give an overview of process trees and define a number of operations on them. In Section 4, we define the program transformer hierarchy, where the 
transformer at level 0 corresponds to the identity transformation, and each successive transformer is defined in terms of the previous ones. In Section 5, we prove that 
each of the transformers in our hierarchy terminates. In Section 6, we consider the efficiency improvements that can be obtained as we move up through this hierarchy. 
Section 7 concludes and considers related work.
\section{Language}
 \label{sec-language-definition}

In this section, we describe the call-by-name higher-order functional language that will be used throughout this paper. 
\begin{definition}[Language Syntax]
\normalfont{The syntax of this language is as shown in Fig. \ref{grammar}.} 
\end{definition}
\begin{figure}[htb]
\begin{center}
\begin{tabular}{@{\hspace*{0mm}}l@{\hspace*{1mm}}r@{\hspace*{1mm}}l@{\hspace*{1mm}}l@{\hspace*{0mm}}}
$\expr{\var{prog}}$ & ::= & $\expr{\wheredots{\var{e_{0}}}{\var{h_1}}{\var{e_1}}{\var{h_n}}{\var{e_n}}}$ & Program \\
\\
$\expr{\var{e}} \in Exp$ & ::= & $\expr{\var{x}}$ & Variable \\
& $|$ & $\expr{\app{\var{c}}{\args{\var{e_1}}{\var{e_n}}}}$ & Constructor Application \\
& $|$ & $\expr{\abs{\var{x}}{\var{e}}}$ & $\lambda$-Abstraction \\
& $|$ & $\expr{\var{f}}$ & Function Call \\
& $|$ & $\expr{\app{\var{e_0}}{\var{e_1}}}$ & Application \\
& $|$ & $\expr{\casedots{\var{e_0}}{\var{p_1}}{\var{e_1}}{\var{p_n}}{\var{e_n}}}$ & Case Expression \\ 
& $|$ & $\expr{\Letexp{\var{x}}{\var{e_0}}{\var{e_1}}}$ & Let Expression \\
\\
$\expr{\var{h}}$ & ::= & $\expr{\app{\var{f}}{\args{\var{x_1}}{\var{x_n}}}}$ & Function Header \\
\\
$\expr{\var{p}}$ & ::= & $\expr{\app{\var{c}}{\args{\var{x_1}}{\var{x_n}}}}$ & Pattern
\end{tabular}
\end{center}
\caption{Language Syntax}
\label{grammar}
\end{figure} 
Programs in the language consist of an expression to evaluate and a set of function definitions.
An expression can be a variable, constructor application, $\lambda$-abstraction, function call, application, {\bf case} or {\bf let}. 
Variables introduced by function definitions, $\lambda$-abstractions, {\bf case} patterns and {\bf let}s are {\em bound}; all other variables are {\em free}. 
We assume that bound variables are represented using De Bruijn indices. An expression that contains no free variables is said to be {\em closed}.
We write $e \equiv e'$ if $e$ and $e'$ differ only in the names of bound variables.

Each constructor has a fixed arity; for example $\expr{\var{Nil}}$ has arity 0 and $\expr{\var{Cons}}$ has arity 2. 
In an expression $\expr{\app{\var{c}}{\args{\var{e_{1}}}{\var{e_{n}}}}}$, $n$ must equal the arity of $c$. 
The patterns in {\bf case} expressions may not be nested.  No variable may appear more than once within a pattern. 
We assume that the patterns in a {\bf case} expression are non-overlapping and exhaustive.
It is also assumed that erroneous terms such as $\expr{\app{\brackets{\app{\var{c}}{\args{\var{e_1}}{\var{e_n}}}}}{\var{e}}}$ 
where $c$ is of arity $n$ and $\expr{\casedots{\brackets{\abs{x}{e}}}{\var{p_1}}{\var{e_1}}{\var{p_k}}{\var{e_k}}}$ cannot occur.
\begin{definition}[Substitution]
\normalfont{We use the notation $\theta = \{x_1 \mapsto e_1, \ldots, x_n \mapsto e_n\}$ to denote a {\em substitution}.
If $e$ is an expression, then $e\theta = e\{x_1 \mapsto e_1, \ldots, x_n \mapsto e_n\}$ is the result of simultaneously 
substituting the expressions $e_1,\ldots, e_n$ for the corresponding variables $x_1,\ldots,x_n$, respectively, 
in the expression $e$ while ensuring that bound variables are renamed appropriately to avoid name capture.
A {\em renaming} denoted by $\sigma$ is a substitution of the form $\{x_1 \mapsto x_1', \ldots, x_n \mapsto x_n'\}$.} 
\end{definition}
\begin{definition}[Shallow Reduction Context]
\normalfont{A shallow reduction context ${\cal C}$ is an expression containing a single hole $\bullet$ in the place of the redex, which can have one of the two following possible forms:
\begin{center}
${\cal C} ::= \bullet~e~|~{\bf case}~\bullet~{\bf of}~p_1 \Rightarrow e_1 \ldots p_n \Rightarrow e_n$
\end{center}
} 
\end{definition}
\begin{definition}[Evaluation Context]
\normalfont{An evaluation context ${\cal E}$ is represented as a sequence of shallow reduction contexts (known as a {\em zipper} 
\cite{HUET97}), representing the nesting of these contexts from innermost to outermost within which the redex is contained. 
An evaluation context can therefore have one of the two following possible forms:
\begin{center}
${\cal E} ::= \langle \rangle~|~\langle {\cal C}:{\cal E} \rangle$
\end{center}}
\end{definition}
\begin{definition}[Insertion into Evaluation Context]
\normalfont{The insertion of an expression $e$ into an evaluation context $\kappa$, denoted by $\expr{\place{e}{\kappa}}$, is defined as follows:
\begin{center}
\begin{tabular}{lcl}
$\expr{\place{e}{\langle \rangle}}$ & = & $e$ \\
$\expr{\place{e}{\langle (\bullet~e'):\kappa \rangle}}$ & = & $\expr{\place{(e~e')}{\kappa}}$ \\
\multicolumn{3}{l}{$\expr{\place{e}{\langle ({\bf case}~\bullet~{\bf of}~p_1 \Rightarrow e_1 \ldots p_n \Rightarrow e_n):\kappa \rangle}}$} \\
& = & $\expr{\place{({\bf case}~e~{\bf of}~p_1 \Rightarrow e_1 \ldots p_n \Rightarrow e_n)}{\kappa}}$
\end{tabular}
\end{center}} 
\end{definition}
\begin{figure}[b]
\begin{center}
\begin{tabular}{l}
$\expr{\evalprog{\var{e}}{\Delta}}$ = $\expr{\eval{\var{e}}{\langle \rangle}{\Delta}}$ \\
\\
$\expr{\eval{\app{\var{c}}{\args{\var{e_{1}}}{\var{e_{n}}}}}{\langle \rangle}{\Delta}}$ = $\expr{\app{\var{c}}{\args{\brackets{\eval{\var{e_1}}{\langle \rangle}{\Delta}}}{\brackets{\eval{\var{e_n}}{\langle \rangle}{\Delta}}}}}$ \\
$\expr{\eval{\app{\var{c}}{\args{\var{e_{1}}}{\var{e_{n}}}}}{\langle (\casedots{\bullet}{\var{p_{1}}}{\var{e_1'}}{\var{p_{k}}}{\var{e_k'}}):\kappa \rangle}{\Delta}}$ = \\
~~~~~~$\expr{\eval{\var{e'_i\{x_1 \mapsto e_1,\ldots,x_n \mapsto e_n\}}}{\kappa}{\Delta}}$ \\
~~~~~~where $\exists i \in \{1 \ldots k\}.\expr{\var{p_{i}}} = \expr{\app{\var{c}}{\args{\var{x_{1}}}{\var{x_{n}}}}}$ \\
$\expr{\eval{\abs{\var{x}}{\var{e}}}{\langle \rangle}{\Delta}}$ = $\lambda x.(\expr{\eval{\var{e}}{\langle \rangle}{\Delta}})$ \\
$\expr{\eval{\abs{\var{x}}{\var{e}}}{\langle (\expr{\app{\bullet}{\var{e'}}}):\kappa \rangle}{\Delta}}$ = $\expr{\eval{\var{e\{x \mapsto e'\}}}{\kappa}{\Delta}}$ \\
$\expr{\eval{f}{\kappa}{\Delta}}$ =  $\expr{\eval{\var{\lambda x_1 \ldots x_n.e}}{\kappa}{\Delta}}$ \\
~~~~~~where $(f~x_1 \ldots x_n=e )\in \Delta$ \\
$\expr{\eval{\app{\var{e_0}}{\var{e_1}}}{\kappa}{\Delta}}$ = $\expr{\eval{e_0}{\langle (\expr{\app{\bullet}{\var{e_1}}}):\kappa \rangle}{\Delta}}$ \\
$\expr{\eval{\casedots{\var{e_0}}{\var{p_{1}}}{\var{e_1}}{\var{p_{n}}}{\var{e_n}}}{\kappa}{\Delta}}$ = \\
~~~~~~$\expr{\eval{e_0}{\langle (\expr{\casedots{\bullet}{\var{p_{1}}}{\var{e_1}}{\var{p_{n}}}{\var{e_n}}}):\kappa \rangle}{\Delta}}$ \\
$\expr{\eval{\Letexp{\var{x}}{\var{e_0}}{\var{e_1}}}{\kappa}{\Delta}}$ = $\expr{\eval{\var{e_1\{x \mapsto e_0\}}}{\kappa}{\Delta}}$
\end{tabular}
\end{center}
\caption{Language Semantics}
\label{semantics}
\end{figure}
\begin{definition}[Language Semantics]
\normalfont{The normal order reduction semantics for programs in our language is defined by $\expr{\evalprog{\var{e}}{\Delta}}$ as shown in Fig. \ref{semantics}, 
where $e$ is the expression to be reduced (where it is assumed this contains no free variables) and $\Delta$ is the function environment.}
\end{definition}

Within the rules ${\cal N}_e$, $\kappa$ denotes the context of the expression under scrutiny. 
We always evaluate the redex of an expression first, with the remainder of the expression given by $\kappa$. 
\section{Process Trees}

The output of each of the transformers in our hierarchy are represented by {\em process trees}, as defined in \cite{SORENSEN94B}.
Within these process trees, the nodes are labelled with expressions. We write $e \rightarrow t_1, \ldots, t_n$ for a process tree with 
root node containing the expression $e$, where $t_1, \ldots, t_n$ are the sub-trees of this root node. We also write $e \rightarrow \epsilon$ 
for a process tree in which the root node has no sub-trees. We use $root(t)$ to denote the expression in the root node of process tree $t$.
Process trees may also contain three special kinds of node: 
\begin{itemize}
\item {\em Unfold nodes}: these are of the form $h \rightarrow t$, where $h$ is a function header and $t$ is the process tree resulting from transforming an expression after unfolding.
\item {\em Fold nodes}: these are of the form $h \rightarrow \epsilon$, where folding has been performed with respect to a previous unfold node
and the corresponding function headers are renamings of each other.
\item {\em Generalisation nodes}: these are of the form $t^x$, where the sub-tree $t$ has been generalised to variable $x$.
\end{itemize}

Within a sub-tree of a process tree, variables in unfold node function headers, $\lambda$-abstractions and {\bf case} expressions within ancestor nodes are bound; 
all other variables are free. We use $fv(t)$ and $bv(t)$ to denote the free variables (which includes generalisation variables) and bound variables of 
sub-tree $t$ respectively. We denote the application of a renaming $\sigma$ to a process tree $t$ by $t\sigma$, where the renaming $\sigma$ is applied to all 
the expressions in the nodes of the process tree.

When transforming an expression with a function in the redex, a level $k+1$ transformer will first transform the expression using a level $k$ transformer. 
The resulting process tree is then compared to previously encountered process trees generated at level $k$. If it is a renaming of a previous one, then 
folding is performed, and if it is an embedding of a previous one, then it is generalised. The use of process trees in this comparison allows us to abstract 
away from the number and order of the parameters in functions, and to focus on their recursive structure. We therefore define renaming, embedding and 
generalisation on process trees.
\begin{definition}[Process Tree Renaming]
\normalfont{Process tree $t$ is a {\em renaming} of process tree $t'$ if there is a renaming $\sigma$ and a relation ${\cal S} \subseteq Exp \times Exp$
between the expressions labelling the corresponding nodes of $t\sigma$ and $t'$ such that $t\sigma \cong t'$, where the relation $\cong$ is defined as follows:}
\end{definition}
\begin{center}
\begin{tabular}{ll}
1. & $(h \rightarrow t)\cong (h' \rightarrow t')$, if $(h,h') \in {\cal S} \wedge t \cong t'$ \\
2. & $(h \rightarrow \epsilon) \cong (h' \rightarrow \epsilon)$, if $\exists \sigma.(h\sigma,h'\sigma) \in {\cal S}$ \\
3. & $t_1^{x} \cong t_2^{x}$, if $t _1\cong t_2$ \\
4. & $(\phi(e_1,\ldots,e_n) \rightarrow t_1, \ldots, t_n) \cong (\phi(e_1',\ldots,e_n') \rightarrow t_1', \ldots, t_n')$, \\
& ~~~~if $((\phi(e_1,\ldots,e_n),(\phi(e_1',\ldots,e_n')) \in {\cal S} \wedge \forall i \in \{1 \ldots n\}.t_i \cong t_i'$
\end{tabular}
\end{center} 
The first rule is for unfold nodes, where the pair of function headers must belong to ${\cal S}$, and the corresponding 
sub-trees must be renamings. The second rule is for fold nodes, where the pair of function headers must be renamings 
of a pair of function headers belonging to ${\cal S}$ (from the corresponding unfold nodes). The third rule is for generalisation nodes,
where the corresponding generalised sub-trees must be renamings. The final rule is for all other nodes, where the pair of 
expressions in the corresponding root nodes must have the same top-level syntactic constructor, and the corresponding
sub-trees must also be renamings. This includes the pathological case where the nodes have no sub-trees (such as free 
variables which must have the same name, and bound variables which must have the same de Bruijn index).
\begin{definition}[Process Tree Embedding]
\normalfont{Process tree $t$ is {\em embedded} in process tree $t'$ if there is a renaming $\sigma$ and a relation ${\cal S} \subseteq Exp \times Exp$
between the expressions labelling the corresponding nodes of $t\sigma$ and $t'$ such that $t\sigma \trianglelefteq t'$, where the relation 
$\trianglelefteq$ is defined as follows:}
\end{definition}
\begin{center}
\begin{tabular}{ll}
1. & $(h \rightarrow t) \trianglelefteq (h' \rightarrow t')$, if $(h,h') \in {\cal S} \wedge t \trianglelefteq t'$ \\
2. & $(h \rightarrow \epsilon) \trianglelefteq (h' \rightarrow \epsilon)$, if $\exists \sigma.(h\sigma,h'\sigma) \in {\cal S}$ \\
3. & $t_1^{x} \trianglelefteq t_2^{x}$, if $t _1\trianglelefteq t_2$ \\
4. & $(\phi(e_1,\ldots,e_n) \rightarrow t_1, \ldots, t_n) \trianglelefteq (\phi(e_1',\ldots,e_n') \rightarrow t_1', \ldots, t_n')$, \\
& ~~~~if $((\phi(e_1,\ldots,e_n),(\phi(e_1',\ldots,e_n')) \in {\cal S} \wedge \forall i \in \{1 \ldots n\}.t_i \trianglelefteq t_i'$ \\
5. & $t \trianglelefteq (e \rightarrow t_1, \ldots, t_n)$, if $\exists i \in \{1 \ldots n\}.t \trianglelefteq t_i$ 
\end{tabular}
\end{center}
The first three rules are similar to those for the renaming relation $\cong$ for unfold, fold and generalisation nodes respectively.
The fourth rule is a {\em coupling} rule, where the pair of expressions in the root nodes must have the same top-level syntactic constructor
and the corresponding sub-trees of the root nodes must also be related to each other. This includes the pathological case where the root nodes 
have no sub-trees (such as free variables which must have the same name, and bound variables which must have the same de Bruijn index).
The final rule is a {\em diving} rule; this relates a process-tree with a sub-tree of a larger process tree. 
We write $t\preceq t'$ if $t \trianglelefteq t'$ and any rule other than the diving rule can be applied at the top level.
\begin{example}
\normalfont{Consider the two process trees in Fig. \ref{embedding} that correspond to the expressions $\expr{\app{\app{\var{append}}{\var{xs}}}{\brackets{\app{\app{\var{Cons}}{\var{x}}}{\var{Nil}}}}}$ and $\expr{\app{\app{\var{append}}{\brackets{\app{\app{\var{append}}{\var{xs}}}{\brackets{\app{\app{\var{Cons}}{\var{x}}}{\var{Nil}}}}}}}{(\app{\app{\var{Cons}}{\var{y}}}{}}}$ $Nil)$ respectively. 
Process tree (1) is embedded in process tree (2) by the relation $\preceq$.}
\end{example}
\begin{center}
\begin{figure}[htb]
\begin{center}
\begin{tabular}{cc}
(1) & 
\Tree [.\node[draw, rounded corners,minimum size=5mm]{$\expr{\app{\app{\var{f}}{\var{x}}}{\var{xs}}}$}; 
            [.\node[draw, rounded corners,minimum size=5mm]{$\expr{\Cas{\var{xs}}{\var{Nil}}{\ldots}{\app{\app{\var{Cons}}{\var{x'}}}{\var{xs'}}}{\ldots}}$};
               [.\node[draw, rounded corners,minimum size=5mm]{\raisebox{1pt}{$xs$}}; ]
               [.\node[draw, rounded corners,minimum size=5mm]{$\expr{\app{\app{\var{Cons}}{\var{x}}}{\var{Nil}}}$};
                  [.\node[draw, rounded corners,minimum size=5mm]{\raisebox{2pt}{$\expr{\var{x}}$}};]
                  [.\node[draw, rounded corners,minimum size=5mm]{$\expr{\var{Nil}}$};] ]
               [.\node[draw, rounded corners,minimum size=5mm]{$\expr{\app{\app{\var{Cons}}{\var{x'}}}{\ldots}}$}; 
                  [.\node[draw, rounded corners,minimum size=5mm]{$\expr{\var{x'}}$};]
                  [.\node[draw, rounded corners,minimum size=5mm]{\raisebox{1pt}{$\expr{\app{\app{\var{f}}{\var{x}}}{\var{xs'}}}$}};] ] ] ] \\
\\
(2) &
\Tree [.\node[draw, rounded corners,minimum size=5mm]{$\expr{\app{\app{\app{\var{f'}}{\var{x}}}{\var{y}}}{\var{xs}}}$}; 
            [.\node[draw, rounded corners,minimum size=5mm]{$\expr{\Cas{\var{xs}}{\var{Nil}}{\ldots}{\app{\app{\var{Cons}}{\var{x'}}}{\var{xs'}}}{\ldots}}$};
               [.\node[draw, rounded corners,minimum size=5mm]{\raisebox{1pt}{$xs$}}; ]
               [.\node[draw, rounded corners,minimum size=5mm]{\raisebox{1pt}{$\expr{\app{\app{\var{Cons}}{\var{x}}}{\brackets{\app{\app{\var{Cons}}{\var{y}}}{\var{Nil}}}}}$}};
                  [.\node[draw, rounded corners,minimum size=5mm]{\raisebox{1pt}{$\expr{\var{x}}$}};]
                  [.\node[draw, rounded corners,minimum size=5mm]{$\expr{\app{\app{\var{Cons}}{\var{y}}}{\var{Nil}}}$};
                     [.\node[draw, rounded corners,minimum size=5mm]{\raisebox{2pt}{$\expr{\var{y}}$}};]
                     [.\node[draw, rounded corners,minimum size=5mm]{$\expr{\var{Nil}}$};] ] ]
               [.\node[draw, rounded corners,minimum size=5mm]{$\expr{\app{\app{\var{Cons}}{\var{x'}}}{\ldots}}$}; 
                  [.\node[draw, rounded corners,minimum size=5mm,distance=1.3cm]{$\expr{\var{x'}}$};]
                  [.\node[draw, rounded corners,minimum size=5mm]{\raisebox{1pt}{$\expr{\app{\app{\app{\var{f'}}{\var{x}}}{\var{y}}}{\var{xs'}}}$}};] ] ] ] \\
\end{tabular}
\end{center}
\caption{Embedded Process Trees}
\label{embedding}
\end{figure}
\end{center}
The generalisation of a process tree involves replacing sub-trees with generalisation variables and creating {\em tree substitutions}. 
\begin{definition}[Tree Substitution]
\normalfont{We use the notation $\varphi = \{x_1 \mapsto t_1, \ldots,$ $x_n \mapsto t_n\}$ to denote a {\em tree substitution}.
If $t$ is an process tree, then $t\varphi = t\{x_1 \mapsto t_1, \ldots, x_n \mapsto t_n\}$ is the result of simultaneously 
substituting the sub-trees $t_1,\ldots, t_n$ for the corresponding variables $x_1,\ldots,x_n$, respectively, 
in the process tree $t$ while ensuring that bound variables are renamed appropriately to avoid name capture.} 
\end{definition}
\begin{definition}[Generalisation]
\normalfont{The generalisation of two process trees $t$ and $t'$ is a triple $(t_g,\varphi_1,\varphi_2)$ where $\varphi_1$ and $\varphi_2$ 
are tree substitutions such that $t_g \varphi_1 \cong t$ and $t_g \varphi_2 \cong t'$.}
\end{definition}
\begin{definition}[The Generalisation Operator $\sqcap$]
\normalfont{The generalisation of two process trees $t$ and $t'$, where $t$ and $t'$ are related by $\preceq$, is given by $t \sqcap t'$.
The following rewrite rules are repeatedly applied to the initial triple $(x,\{x \mapsto t\},\{x \mapsto t'\})$, while 
the process trees associated with the same variable in each of the tree substitutions are related by $\preceq$: 
\begin{center}
$\begin{array}{ccc}
(t_g,\{x \mapsto t_1^{x'}\},\{x \mapsto t_2^{x'}\}) & \Rightarrow & (t_g,\{x \mapsto t_1\},\{x \mapsto t_2\}) \\
\\
\left(\begin{array}{c}
t_g, \\
\{x \mapsto (e \rightarrow t_1, \ldots, t_n)\} \cup \varphi, \\
\{x \mapsto (e' \rightarrow t_1', \ldots, t_n')\} \cup \varphi'
\end{array}\right) 
& \Rightarrow 
& \left(\begin{array}{c}
t_g\{x \mapsto (e \rightarrow x_1, \ldots, x_n)\}, \\
\{x_1 \mapsto t_1,\ldots,x_n \mapsto t_n\} \cup \varphi, \\
\{x_1 \mapsto t_1',\ldots,x_n \mapsto t_n'\} \cup \varphi'
\end{array}\right)
\end{array}$
\end{center}
In the first rule, if both the process trees related by $\preceq$ have a generalisation node at the root, then this generalisation is removed.
The second rule adds the root node of one of the process trees that is related by $\preceq$ into the generalised tree, and new generalisation variables are 
added for the corresponding sub-trees of these root nodes. Note that it does not matter which of the two input process trees the expressions 
in the resulting generalised process tree come from, so long as they all come from one of them (so the corresponding unfold and fold nodes still match); 
the resulting generalised program will be the same.

The following rewrite rule is then exhaustively applied to the triple resulting from the above rewrites to identify common substitutions that were previously given different names:
\begin{center}
$\begin{array}{ccc}
\left(\begin{array}{c}
t_g, \\
\{x \mapsto t,x' \mapsto t\} \cup \varphi, \\
\{x \mapsto t',x' \mapsto t'\} \cup \varphi'
\end{array}\right) 
& \Rightarrow 
& \left(\begin{array}{c}
t_g\{x \mapsto x'\}, \\
\{x' \mapsto t\} \cup \varphi, \\
\{x' \mapsto t'\} \cup \varphi'
\end{array}\right)
\end{array}$
\end{center}}
\end{definition}
\begin{example}
\normalfont{The result of generalising the two process trees in Fig. \ref{embedding} is shown in Fig. \ref{generalisation}, with the mismatched nodes replaced
by the generalisation variable $v$.}
\end{example}
\begin{center}
\begin{figure}[htb]
\begin{center}
\Tree [.\node[draw, rounded corners,minimum size=5mm]{$\expr{\app{\app{\var{f}}{\var{x}}}{\var{xs}}}$}; 
            [.\node[draw, rounded corners,minimum size=5mm]{$\expr{\Cas{\var{xs}}{\var{Nil}}{\ldots}{\app{\app{\var{Cons}}{\var{x'}}}{\var{xs'}}}{\ldots}}$};
               [.\node[draw, rounded corners,minimum size=5mm]{\raisebox{2pt}{$xs$}}; ]
               [.\node[draw, rounded corners,minimum size=5mm]{$\expr{\app{\app{\var{Cons}}{\var{x}}}{\var{\ldots}}}$};
                  [.\node[draw, rounded corners,minimum size=5mm]{\raisebox{2pt}{$\expr{\var{x}}$}};]
                  [.\node{\raisebox{2pt}{$v$}};] ]
               [.\node[draw, rounded corners,minimum size=5mm]{$\expr{\app{\app{\var{Cons}}{\var{x'}}}{\ldots}}$}; 
                  [.\node[draw, rounded corners,minimum size=5mm]{$\expr{\var{x'}}$};]
                  [.\node[draw, rounded corners,minimum size=5mm]{\raisebox{1pt}{$\expr{\app{\app{\var{f}}{\var{x}}}{\var{xs'}}}$}};] ] ] ] 
\end{center}
\caption{Generalised Process Tree}
\label{generalisation}
\end{figure}
\end{center}
\begin{definition}[The Generalisation Substitution Operator $\downarrow$]
\normalfont{The operation $\downarrow$ is applied to the triple resulting from generalisation to make the second tree substitution in the triple explicit within the generalised process tree:
\begin{center}
$\begin{array}{c}
\downarrow (t_g,\varphi,\{x_1 \mapsto t_1,\ldots,x_n \mapsto t_n\}) = t_g\{x_1 \mapsto t_1^{x_1}, \ldots, x_n \mapsto t_n^{x_n}\} 
\end{array}$
\end{center}}
\end{definition}
We now show how a program can be residualised from a process tree. 
\begin{definition}[Residualisation]
\normalfont{A program can be residualised from a process tree $t$ as $\expr{\resprog{\var{t}}{\Delta}}$ (where $\Delta$ is the set of previous function definitions) 
using the rules as shown in Fig. \ref{residual}.} 
\end{definition}
\begin{center}
\begin{figure}[htb]
\begin{center}
\begin{tabular}[t]{l}
$\expr{\resprog{\var{t}}{\Delta}}$ = $(\uparrow (e,\theta),\Delta')$ \\
~~~~~~where $(e,\theta,\Delta') = \expr{\resexp{t}{\{\}}{\Delta}}$ \\
\\
$\expr{\resexp{\var{h \rightarrow t}}{\rho}{\Delta}}$ = $(h',\theta,\{h' = e\} \cup \Delta')$ \\
~~~~~~where $\begin{array}[t]{l}
\expr{\resexp{\var{t}}{(\rho \cup \{h = h'\})}{\Delta}} = (e,\theta,\Delta') \\
h'=f~x_1 \ldots x_n~(f$ is fresh$, \{x_1 \ldots x_n\} = fv(t))
\end{array}$ \\
$\expr{\resexp{\var{h \rightarrow \epsilon}}{\rho}{\Delta}}$ = $\left\{\begin{array}{ll}
(h''\sigma,\{\},\{\}), & $if $\exists h'.(h'=h'') \in \rho \wedge h \equiv h'\sigma \\
(h,\{\},\{h=e\}), & $otherwise (where $(h=e) \in \Delta)
\end{array}\right.$ \\
$\expr{\resexp{\var{e \rightarrow \epsilon}}{\rho}{\Delta}} = (e,\{\},\{\})$ \\
$\expr{\resexp{\var{(c~e_1 \ldots e_n) \rightarrow t_1,\ldots,t_n}}{\rho}{\Delta}} = (c~e_1' \ldots e_n',\bigcup\limits_{i=1}^{n} \theta_i,\bigcup\limits_{i=1}^{n} \Delta_i)$ \\
~~~~~~where $\forall i \in \{1 \ldots n\}.\expr{\resexp{\var{t_i}}{\rho}{\Delta}} = (e_i',\theta_i,\Delta_i)$ \\
$\expr{\resexp{\var{(\lambda x.e) \rightarrow t}}{\rho}{\Delta}} = (\expr{\abs{\var{x}}{\var{e'}}},\theta,\Delta')$ \\
~~~~~~where $\expr{\resexp{\var{t}}{\rho}{\Delta}} = (e',\theta,\Delta')$ \\
$\expr{\resexp{\var{(e_0~e_1) \rightarrow t_0,t_1}}{\rho}{\Delta}} = (e_0'~e_1',\bigcup\limits_{i=1}^{2} \theta_i,\bigcup\limits_{i=1}^{2}\Delta_i)$ \\
~~~~~~where $\forall i \in \{0 \ldots 1\}.\expr{\resexp{\var{t_i}}{\rho}{\Delta}} = (e_i',\theta_i,\Delta_i)$ \\
$\expr{\resexp{\var{({\bf case}~e_0~{\bf of}~p_1 \Rightarrow e_1 \ldots p_n \Rightarrow e_n) \rightarrow t_0,\ldots,t_n}}{\rho}{\Delta}}$ = \\
~~~~~~$(\expr{\casedots{\var{e_0'}}{\var{p_{1}}}{\var{e_1'}}{\var{p_{n}}}{\var{e_n'}}},\bigcup\limits_{i=0}^{n} \theta_i,\bigcup\limits_{i=0}^{n} \Delta_i)$ \\
~~~~~~where $\forall i \in \{0 \ldots n\}.\expr{\resexp{\var{t_i}}{\rho}{\Delta}} = (e_i',\theta_i,\Delta_i)$ \\
$\expr{\resexp{\var{t^x}}{\rho}{\Delta}}$ = $(x~x_1 \ldots x_n,\{x \mapsto \lambda x_1 \ldots x_n.e\} \cup \theta,\Delta')$ \\
~~~~~~where $\expr{\resexp{\var{t}}{\rho}{\Delta}} = (e,\theta,\Delta')$ ($\{x_1 \ldots x_n\} = bv(t)$)
\end{tabular}
\end{center}
\caption{Rules For Residualisation}
\label{residual}
\end{figure}
\end{center}
Within the rules ${\cal R}_e$, the parameter $\rho$ contains the unfold node function headers and the associated new function headers that are 
created for them in the residualised program.
On encountering an unfold node, a new function header is created, associated with the unfold node function header, and added to $\rho$. 
Note that this new function header may not have the same variables as the one in the unfold node, as new variables may have been added to the sub-tree 
as a result of generalisation. On encountering a fold node, if it matches a corresponding unfold node, then a recursive call of the function associated 
with the unfold node function header in $\rho$ is created. Otherwise, the fold node function call and its previous definition are used (this will only occur
for process trees produced by our level 0 transformer).
Within the rules ${\cal R}_e$, environments $\theta$ and $\Delta$ are returned in addition to the residual expression.
$\theta$ contains the generalisation variables and their associated extracted values. $\Delta$ contains the set of newly created function definitions. 
On encountering a generalisation node, the generalised expression is extracted, added to $\theta$, and replaced with an application of the generalisation variable.
The extracted expression is abstracted over its bound variables so that these are not extracted outside their binder. The set of abstracted variables will also be the 
arguments in the generalised variable application.
The generalisation environment $\theta$ is converted to {\bf let}s at the top level using the generalisation extraction operator $\uparrow$, which is defined as follows.
\begin{definition}[The Generalisation Extraction Operator $\uparrow$]
\normalfont{The generalisation extraction operator $\uparrow$, where $\uparrow(e,\theta)$ extracts the generalisation environment $\theta$ from the 
expression $e$ and is defined as follows:
\begin{center}
$\uparrow(e,\{x_1 \mapsto e_1,\ldots,x_n \mapsto e_n\}) = \expr{\multiletexp{\var{x_1}}{\var{e_1}}{\var{x_n}}{\var{e_n}}{\var{e}}}$
\end{center}}
\end{definition}

\section{A Hierarchy of Program Transformers}

In this section, we define our hierarchy of transformers. The level $k$ transformer is defined as $\expr{\transformprog{k}{\var{e}}{\Delta}}$,
where $e$ is the expression to be transformed and $\Delta$ is the function environment. It is assumed that the input program contains no 
$\lambda$-abstractions or {\bf let} expressions ($\lambda$-abstractions can be replaced by named functions and {\bf let} expressions can be substituted).
The output of the transformer is a process tree from which the transformed program can be residualised.

\subsection{Level 0 Transformer}

Level 0 in our hierarchy essentially corresponds to the identity transformation, and just converts a term to a corresponding process tree:
\begin{center}
$\expr{\transformprog{0}{\var{\phi(e_1,\ldots,e_n)}}{\Delta}} = \phi(e_1,\ldots,e_n) \rightarrow (\expr{\transformprog{0}{\var{e_1}}{\Delta}}),\ldots,(\expr{\transformprog{0}{\var{e_n}}{\Delta}})$
\end{center}
The sub-terms of the original term are therefore simply mapped to the corresponding sub-trees of the resulting process tree.

\subsection{Level $k+1$ Transformers}

Each subsequent level ($k+1$) in our hierarchy is built on top of the previous levels. The rules for level $k+1$ transformation of programs in our language are
defined by $\expr{\transformprog{k+1}{\var{e}}{\Delta}}$ as shown in Fig. \ref{levelk+1rules}.
\begin{figure}[htb]
\begin{center}
\begin{tabular}{ll}
(1) & $\expr{\transformprog{k+1}{\var{e}}{\Delta}}$ = $\expr{\transform{k+1}{e}{\langle \rangle}{\{\}}{\{\}}{\Delta}}$ \\
\\
(2) & $\expr{\transform{k+1}{\var{x~x_1 \ldots x_n}}{\kappa}{\rho}{\theta}{\Delta}}$ = $\begin{array}[t]{l}
\expr{\transform{\kappa}{\var{t^x}}{\kappa}{\rho}{\theta}{\Delta}}, $ if $x \in dom(\theta) \\
$where $t = \expr{\transform{k+1}{\var{\theta(x)~x_1 \ldots x_n}}{\kappa}{\rho}{\theta}{\Delta}}
\end{array}$ \\
(3) & $\expr{\transform{k+1}{\var{x}}{\langle \rangle}{\rho}{\theta}{\Delta}}$ = $x \rightarrow \epsilon$ \\
(4) & $\expr{\transform{k+1}{\var{x}}{(\langle (\casedots{\bullet}{\var{p_1}}{\var{e_1}}{\var{p_n}}{\var{e_n}}):\kappa \rangle)}{\rho}{\theta}{\Delta}}$ = \\ 
\multicolumn{2}{l}{$\expr{\transformcon{k+1}{x \rightarrow \epsilon}{\langle (\casedots{\bullet}{p_1}{\subst{\brackets{\place{e_1}{\kappa}}}{p_1}{x}}{p_n}{\subst{\brackets{\place{e_n}{\kappa}}}{p_n}{x}}):\langle \rangle \rangle}{\rho}{\theta}{\Delta}}$} \\
(5) & $\expr{\transform{k+1}{\var{x}}{\langle (\expr{\app{\bullet}{\var{e}}}):\kappa \rangle}{\rho}{\theta}{\Delta}}$ = $\expr{\transformcon{k+1}{x \rightarrow \epsilon}{\langle (\expr{\app{\bullet}{\var{e}}}):\kappa \rangle}{\rho}{\theta}{\Delta}}$ \\
(6) & $\expr{\transform{k+1}{\app{\var{c}}{\args{\var{e_{1}}}{\var{e_{n}}}}}{\langle \rangle}{\rho}{\theta}{\Delta}}$ = \\
& ~~~~~~$(\expr{\app{\var{c}}{\args{\var{e_{1}}}{\var{e_{n}}}}}) \rightarrow (\expr{ \transform{k+1}{\var{e_1}}{\langle \rangle}{\rho}{\theta}{\Delta}}),\ldots,(\expr{ \transform{k+1}{\var{e_n}}{\langle \rangle}{\rho}{\theta}{\Delta}})$ \\
(7) & $\expr{\transform{k+1}{\app{\var{c}}{\args{\var{e_{1}}}{\var{e_{n}}}}}{\langle (\casedots{\bullet}{\var{p_{1}}}{\var{e_1'}}{\var{p_{k}}}{\var{e_k'}}):\kappa \rangle}{\rho}{\theta}{\Delta}}$ = \\
& ~~~~~~$\expr{ \transform{k+1}{\var{e'_i\{x_1 \mapsto e_1,\ldots,x_n \mapsto e_n\}}}{\kappa}{\rho}{\theta}{\Delta}}$ \\
& ~~~~~~where $\exists i \in \{1 \ldots k\}.\expr{\var{p_{i}}} = \expr{\app{\var{c}}{\args{\var{x_{1}}}{\var{x_{n}}}}}$ \\
(8) & $\expr{\transform{k+1}{\abs{\var{x}}{\var{e_0}}}{\langle \rangle}{\rho}{\theta}{\Delta}} = (\expr{\abs{\var{x}}{\var{e_0}}}) \rightarrow (\expr{ \transform{k+1}{\var{e_0}}{\langle \rangle}{\rho}{\theta}{\Delta}})$ \\
(9) & $\expr{\transform{k+1}{\abs{\var{x}}{\var{e_0}}}{\langle (\expr{\app{\bullet}{\var{e_1}}}):\kappa \rangle}{\rho}{\theta}{\Delta}} = \expr{ \transform{k+1}{\var{e_0\{x \mapsto e_1\}}}{\kappa}{\rho}{\theta}{\Delta}}$ \\
(10) & $\expr{\transform{k+1}{\var{f}}{\kappa}{\rho}{\theta}{\Delta}} = \begin{array}[t]{l}\left\{\begin{array}[m]{l}
h\sigma \rightarrow \epsilon,\hfill $ if $\exists (h=t') \in \rho, \sigma.t'\sigma \cong t \\
\expr{\transform{k+1}{\var{e}}{\langle \rangle}{\rho}{\theta}{\Delta'}}, \hfill $ if $\exists (h=t') \in \rho, \sigma.t' \sigma \preceq t \\
$where $\begin{array}[t]{l}
(e,\Delta') = \expr{\resprog{\var{\downarrow(t' \sigma \sqcap t)}}{\Delta}}
\end{array} \\
\expr{h' \rightarrow \transform{k+1}{\var{\lambda x_1 \ldots x_n.e}}{\kappa}{(\rho \cup \{h'=t\})}{\theta}{\Delta}}, \hfill $otherwise$ \\
$where $\begin{array}[t]{l}
(f~x_1 \ldots x_n=e) \in \Delta \\
h'=f'~x_1' \ldots x_k'$ ($f'$ is fresh, $\{x_1' \ldots x_k'\} = fv(t))
\end{array}
\end{array}\right. \\
$where $t = \expr{\transformprog{k}{\place{\var{f}}{\kappa}}{\Delta}}
\end{array}$ \\
(11) & $\expr{\transform{k+1}{\app{\var{e_0}}{\var{e_1}}}{\kappa}{\rho}{\theta}{\Delta}}$ = $\expr{\transform{k+1}{\var{e_0}}{\langle (\expr{\app{\bullet}{\var{e_1}}}):\kappa \rangle}{\rho}{\theta}{\Delta}}$ \\
(12) & $\expr{\transform{k+1}{\casedots{\var{e_0}}{\var{p_{1}}}{\var{e_1}}{\var{p_{n}}}{\var{e_n}}}{\kappa}{\rho}{\theta}{\Delta}}$ = \\
& ~~~~~~$\expr{\transform{k+1}{\var{e_0}}{\langle (\expr{\casedots{\bullet}{\var{p_{1}}}{\var{e_1}}{\var{p_{n}}}{\var{e_n}}}):\kappa \rangle}{\rho}{\theta}{\Delta}}$ \\
(13) & $\expr{\transform{k+1}{\Letexp{\var{x}}{\var{e_0}}{\var{e_1}}}{\kappa}{\rho}{\theta}{\Delta}}$ = $\expr{\transform{k+1}{\var{e_1}}{\kappa}{\rho}{(\theta \cup \{x \mapsto e_0\})}{\Delta}}$ \\
\\
(14) & $\expr{\transformcon{k+1}{\var{t}}{\langle \rangle}{\rho}{\theta}{\Delta}}$ = $t$ \\
(15) & $\expr{\transformcon{k+1}{\var{t}}{(\kappa = \langle (\bullet~e):\kappa' \rangle)}{\rho}{\theta}{\Delta}}$ = \\
& ~~~~~~$\expr{\transformcon{k+1}{(\expr{\place{root(t)}{\kappa}}) \rightarrow t,(\expr{\transform{k+1}{e}{\langle \rangle}{\rho}{\theta}{\Delta}})}{\kappa'}{\rho}{\theta}{\Delta}}$ \\
(16) & $\expr{\transformcon{k+1}{\var{t}}{(\kappa = \langle (\casedots{\bullet}{\var{p_{1}}}{\var{e_1}}{\var{p_{n}}}{\var{e_n}}):\kappa' \rangle)}{\rho}{\theta}{\Delta}}$ = \\
& ~~~~~~$(\expr{\place{root(t)}{\kappa}}) \rightarrow t,(\expr{ \transform{k+1}{e_1}{\kappa'}{\rho}{\theta}{\Delta}}),\ldots,(\expr{ \transform{k+1}{e_n}{\kappa'}{\rho}{\theta}{\Delta}})$
\end{tabular}
\end{center}
\caption{Level $k+1$ Transformation Rules}
\label{levelk+1rules}
\end{figure}
Within the rules ${\cal T}_e^{k+1}$, $\kappa$ denotes the context of the expression under scrutiny, $\rho$ contains memoised process trees and their associated new
function headers, and $\theta$ gives the {\bf let} variables and their associated values. For most of the level $k+1$ transformation rules, normal order reduction is 
applied to the current term, as for the semantics given in Fig. \ref{semantics}. 

In rule (10), if the redex of the current term is a function, then it is transformed by the 
transformer one level lower in the hierarchy (level $k$) producing a process tree; this is therefore where the transformer builds on all the transformers at lower levels.
This level $k$ process tree is compared to the previous process trees produced at level $k$ (contained in $\rho$). If the process tree is a {\em renaming} of a 
previous one, then {\em folding} is performed, and a fold node is created using a recursive call of the function associated with the previous process tree in $\rho$. 
If the process tree is an {\em embedding} of a previous one, then {\em generalisation} is performed;
the result of this generalisation is then residualised and further transformed. Otherwise, the current process tree is memoised by being associated with a new function 
call in $\rho$; an unfold node is created with this new function call in the root node, with the result of transforming the unfolding of the current term as its 
sub-tree. 

In rule (7), if the context surrounding a constructor application redex is a {\bf case}, then pattern matching is performed and the appropriate branch of the
{\bf case} is selected, thus removing the constructor application. This is where our transformers actually remove intermediate data structures.

In rule (4), if the context surrounding a variable redex is a {\bf case}, then information is propagated to each branch of the {\bf case}
to indicate that this variable has the value of the corresponding branch pattern.

In rule (13), a {\bf let} variable is associated with its value in the environment $\theta$ and the {\bf let} is removed. 
In rule (2), if one of these {\bf let} variables is subsequently encountered, then it is replaced with a generalisation node; 
the original term generalisation is therefore replaced by a tree generalisation.

The rules ${\cal T}_\kappa^{k+1}$ are defined on a process tree and a surrounding context. 
These rules are applied when the normal-order reduction of the input program becomes `stuck' as a result of
encountering a variable in the redex position. In this case, the surrounding context is further transformed.

\section{Termination}

In order to prove that each of the transformers in our hierarchy terminate, we need to show that in any infinite 
sequence of process trees encountered during transformation $t_0, t_1, \ldots$ there definitely exists some 
$i < j$ where $t_i \preceq t_j$, so an embedding must eventually be encountered and transformation will 
not continue indefinitely without folding or generalising.  This amounts to proving that the embedding relation 
$\preceq$ is a {\em well-quasi order}. 
\begin{lemma}[$\preceq$ is a Well-Quasi Order]
\normalfont{The embedding relation $\preceq$ is a {\em well-quasi order} on any sequence of process trees 
that are encountered during transformation at level $k>0$ in our hierarchy.}
\label{wellquasi}
\end{lemma}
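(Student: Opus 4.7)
The plan is to reduce the claim to Kruskal's tree theorem, which states that homeomorphic embedding on finite trees labelled by a well-quasi-ordered alphabet is itself a well-quasi-order. The relation $\preceq$, with its coupling rule (4) and diving rule (5), is precisely a homeomorphic tree embedding, while the special-node rules (1)--(3) treat unfold, fold and generalisation nodes as distinguished kinds that must be matched in kind. So once the label alphabet (quasi-ordered by the compatibility relation encoded in $\mathcal{S}$) is shown to be a well-quasi-order, the lemma follows.

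First I would observe that every process tree produced during transformation is finite: each clause of Fig.~\ref{levelk+1rules} builds a tree from strictly fewer recursive transformations plus memoised/folded references, so finiteness follows by structural induction on the construction. Next I would classify node labels into the finitely many kinds that can appear: coupling nodes carrying an expression whose top-level constructor is drawn from the fixed syntactic signature of the source program (variable, constructor application, $\lambda$-abstraction, application, case, let), unfold and fold nodes carrying function headers, and generalisation markers tagged by a variable.

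The crucial point is that transformation always runs on a fixed finite input program. The set of top-level syntactic constructors therefore ranges over a finite set (the constructors of the program plus the fixed grammatical constructors of the language), so the coupling-case labels, quasi-ordered by "same top-level syntactic constructor", form a finite and hence well-quasi-ordered alphabet. Free variables and bound variables are matched either by the outer renaming $\sigma$ or by de Bruijn equality, which is unproblematic. For level $k+1$, the process trees compared in rule (10) are outputs of the level $k$ transformer; by induction on $k$ (the termination of lower levels, being proved together with this lemma) those trees are themselves finite and their labels are drawn from the same bounded alphabet, so the inductive structure of the hierarchy is respected.

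The main obstacle is that rule (10) freely introduces new function symbols $f'$ and fresh variables $x_1', \ldots, x_k'$, so a naive reading of the alphabet is infinite. I would resolve this by noting that the definition of $\preceq$ quantifies existentially over a renaming $\sigma$ and allows $\mathcal{S}$ to pair headers up to renaming (rules 1--2); this quotients out fresh names so that the effective label alphabet for Kruskal's theorem is the finite set of skeletons modulo variable renaming. Once this quotient is in place, Kruskal's theorem yields that $\preceq$ is a well-quasi-order on finite process trees, and therefore in any infinite sequence $t_0, t_1, \ldots$ encountered during level $k+1$ transformation there must be indices $i<j$ with $t_i \preceq t_j$, forcing folding or generalisation and hence termination.
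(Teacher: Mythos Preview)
Your proposal is correct and follows essentially the same approach as the paper: induction on the hierarchy level $k$, reduction to Kruskal's tree theorem, and an argument that the effective label alphabet on process-tree nodes is finite. The paper's version additionally spells out that application arities and de~Bruijn indices are bounded (so each can be encoded by a finite family of constructors), whereas you are more explicit than the paper about why the freshly generated function names in rule~(10) do not blow up the alphabet---namely because the existential renaming $\sigma$ and the relation $\mathcal{S}$ in the definition of $\preceq$ identify unfold/fold headers up to renaming.
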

\begin{proof}
\normalfont{The proof is by induction on the hierarchy level $k$. 

For level 1, the proof is similar to that given in \cite{KLYUCHNIKOV10A}. This involves showing that there are a finite number of 
functors (function names and constructors) in the language. The process trees encountered during transformation are those produced at level 0, 
so the function names will be those from the original program, so must be finite. 
Applications of different arities are replaced with separate constructors; we prove that arities are bounded, so there are a finite number of these. 
We also replace {\bf case} expressions with constructors. Since bound variables are defined using de Bruijn indices, each of these are replaced 
with separate constructors; we also prove that de Bruijn indices are bounded. The overall number of functors is therefore finite, 
so Kruskal's tree theorem can then be applied to show that $\preceq$ is a well-quasi-order at level 1 in our hierarchy.

At level $k+1$, the process trees encountered during transformation are those produced at level $k$ and must be finite (by the inductive hypothesis).
The number of functions in these process trees must therefore be finite, and the same argument given above for level 1 also applies here, so $\preceq$ 
is a well-quasi-order at level $k+1$ in our hierarchy.}
\end{proof}
Since we only check for embeddings for process trees resulting from the transformation of expressions which have a named function as redex, 
we need to show that every potentially infinite sequence of expressions encountered during transformation must include expressions of this form.
\begin{lemma}[Function Unfolding During Transformation]
\normalfont{Every infinite sequence of transformation steps must include function unfolding.}
\end{lemma}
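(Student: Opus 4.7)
The approach is to prove the contrapositive: any segment of the transformation that avoids rule (10) must be finite. The key observation is that rule (10) is the only rule whose redex is a bare function call $f$, and it is also the only source of fresh $\lambda$-abstractions and fresh {\bf let} expressions in the transformation --- both via the explicit $\lambda x_1\ldots x_n.e$ introduced in its memoise branch and via the residualisation machinery $\mathcal{R}_p, \mathcal{R}_e$ invoked in its generalise branch, which emits {\bf let}s through the $\uparrow$ operator. Since the input program is assumed (at the start of Section 4) to contain no $\lambda$-abstractions or {\bf let}s, any prefix of the transformation that never applies rule (10) keeps the current expression $\lambda$-free and {\bf let}-free and keeps $\theta$ empty, thereby eliminating rules (2), (8), (9) and (13) from consideration.

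The remaining rules --- (3)--(7), (11), (12), (14)--(16) --- are precisely the steps of normal-order evaluation of the expression-in-context under the semantics of Fig. \ref{semantics}, modified so that every named function in $\Delta$ is treated as an opaque irreducible constant: rules (11) and (12) push into the context to locate a redex, rule (7) fires a constructor-case pattern match, rule (4) propagates a case's discrimination information into its branches on a variable redex, and the context rules (14)--(16) complete the traversal after a value is produced. My plan is to define a well-founded lexicographic measure on the configuration $(e,\kappa,\rho,\theta)$ whose primary component is the number of opaque-reduction steps still required before reduction stalls on a function symbol or reaches weak head normal form, with secondary components counting the syntactic size of the residual sub-expressions visited by the structural rules (6), (15) and (16). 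The primary component is a well-defined natural number because, with function symbols opaque and no $\lambda$-abstractions or {\bf let}s in play, the reduction amounts to a finite tree-walk over a finite first-order expression.

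The main obstacle is showing strict descent of this measure for rule (4) and rule (7), each of which can transiently inflate the raw syntactic size of the configuration: rule (4) duplicates the outer context into each branch of the {\bf case}, and rule (7) instantiates a case branch by substituting the matched sub-expressions. The resolution is that the measure must count opaque reduction events rather than syntactic size. Under opaque reduction, rule (4) corresponds to finitely many information-propagation steps, one per branch, and rule (7) corresponds to a single case-reduction step, so both strictly decrease the residual opaque-reduction count. The remaining structural and context rules do not perform reductions but strictly decrease the secondary component because they descend into strict sub-expressions or consume one layer of the context. Therefore the lexicographic measure descends at every step, so any rule-(10)-free segment must be finite, yielding the lemma.
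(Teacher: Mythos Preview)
Your proposal is correct and rests on the same key observation as the paper's proof: since the input contains no $\lambda$-abstractions (or \textbf{let}s) and rule~(10) is their only source, avoiding rule~(10) confines the transformation to a terminating fragment. The paper's own argument is a three-sentence sketch that simply asserts any infinite sequence must include function unfolding or $\lambda$-application and then notes that $\lambda$-abstractions arise only via unfolding; you have fleshed out the termination of the remaining rules with an explicit measure, but the underlying approach is the same.
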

\begin{proof}
\normalfont{Every infinite sequence of transformation steps must include either function unfolding or 
$\lambda$-application. Since we do not allow $\lambda$-abstractions in our input program, the only way 
in which new $\lambda$-abstractions can be introduced is by function unfolding. 
Thus, every infinite sequence of transformation steps must include function unfolding.}
\end{proof}
\begin{theorem}[Termination of Transformation]
\normalfont{The transformation algorithm always terminates.}
\end{theorem}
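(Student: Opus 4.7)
\normalfont{The plan is to proceed by induction on the hierarchy level $k$. The base case $k=0$ is immediate, since $\mathcal{T}_p^0$ is nothing more than a structural recursion on a finite input expression.

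For the inductive step, assume that every invocation of the level-$k$ transformer terminates, and consider a run of $\mathcal{T}_p^{k+1}$. I suppose, for contradiction, that this run performs infinitely many transformation steps. By the Function Unfolding lemma, it must contain infinitely many uses of rule (10); each such step first calls $\mathcal{T}_p^k$, which terminates by the inductive hypothesis and delivers a level-$k$ process tree. Collecting these yields an infinite sequence $t_0, t_1, \ldots$ of exactly the kind covered by Lemma \ref{wellquasi}, which forces indices $i < j$ with $t_i \preceq t_j$. At step $j$, rule (10) must therefore take either its renaming branch (producing a fold node, which terminates that branch) or its embedding branch (generalising via $\sqcap$ and $\downarrow$, residualising, and re-transforming).

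The remaining task, and the place I expect the real work to lie, is showing that the embedding branch cannot itself sustain an infinite chain of generalisations. My plan is to exhibit a well-founded measure on process trees---such as the number of nodes that are not generalisation markers---and to verify that each application of $\sqcap$ strictly decreases it, because at least one mismatched sub-tree is replaced by a fresh generalisation variable. A finite chain of generalisations then brings us to a tree on which no further embedding can fire against the memoised $\rho$, so the corresponding sequence of unfoldings must eventually be absorbed by a fold, contradicting infinitude. The main obstacle is making the measure argument robust in the presence of all the rewrite rules defining $\sqcap$---in particular the common-subtree identification rule which collapses duplicated generalisation variables---and confirming that the re-transformation of the residualised generalisation stays within the scope of the induction so that the same measure continues to apply. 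Once that is in place, the induction closes cleanly using the two lemmas established above.}
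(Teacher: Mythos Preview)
Your plan is considerably more elaborate than the paper's argument, which runs in three sentences: assume non-termination; then the set $\rho$ of memoised process trees must be infinite; but every tree that is added to $\rho$ fails to have any earlier member of $\rho$ embedded in it (otherwise the generalisation branch of rule~(10) would have fired and nothing would have been inserted), and an infinite such sequence contradicts the well-quasi-order property of~$\preceq$.

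Two points of comparison. First, your infinite sequence $t_0,t_1,\ldots$ of \emph{all} level-$k$ trees produced at rule~(10) is not quite the right object: rule~(10) tests $t_j$ only against the trees currently in~$\rho$, and if $t_i$ itself triggered a fold or a generalisation it was never inserted, so knowing $t_i\preceq t_j$ does not directly tell you which branch fires at step~$j$. The paper sidesteps this by looking only at the subsequence actually added to~$\rho$; on that subsequence no pair is $\preceq$-related by construction, and the contradiction with Lemma~\ref{wellquasi} is immediate.

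Second --- and this is to your credit --- you explicitly flag the possibility that the generalisation branch of rule~(10) re-invokes the transformer on a residualised term and could in principle do so forever without ever adding to~$\rho$. The paper simply asserts ``$\rho$ must be infinite'' in any non-terminating run and does not argue this point. Your proposed well-founded measure on process trees decreasing under~$\sqcap$ is exactly the kind of argument needed to justify that step, and is what one finds in more careful termination proofs for supercompilation. So your outline is sound and in fact more scrupulous than the paper's; the paper's proof is shorter precisely because it leaves implicit the issue you single out as ``the real work.''
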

\begin{proof}
\normalfont{
The proof is by contradiction. If the transformation algorithm did not terminate, then the set of memoised 
process trees in $\rho$ must be infinite. Every new process tree which is added to $\rho$ cannot have any of the previous 
process trees in $\rho$ embedded within it by the homeomorphic embedding relation $\preceq$, since 
generalisation would have been performed instead. However, this contradicts the fact that $\preceq$ is a 
well-quasi-order (Lemma \ref{wellquasi}).}
\end{proof}

\section{Speedups}

In this section, we look at the efficiency gains that can be obtained at different levels in our program transformation hierarchy.
\begin{theorem}[Exponential Speedups]
\normalfont{Exponential speedups can only be obtained above level 0 in our hierarchy if common sub-expression
elimination is performed during generalisation.}
\end{theorem}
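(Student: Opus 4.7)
The plan is to prove the contrapositive: if common sub-expression elimination (CSE) is omitted from the generalisation operator $\sqcap$, then no exponential speedup is achievable at any level $k > 0$ in the hierarchy. I would argue by induction on $k$ and reduce the problem to a well-understood property of positive supercompilation at level~1.

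First I would isolate the CSE step in the definition of $\sqcap$. The first two rewrite rules produce a triple $(t_g, \varphi_1, \varphi_2)$ in which every extracted sub-tree receives a unique fresh variable; a straightforward induction on those rewrite steps shows that each generalisation variable appears exactly once in $t_g$. The third rewrite rule, which merges two variables whose bound sub-trees are equal, is precisely CSE. Omitting this rule ensures that no generalisation variable is shared across two distinct positions in $t_g$, so the residualised program produced by $\mathcal{R}_p$ uses each variable introduced by $\uparrow$ exactly once. Consequently the outermost $\mathbf{let}$-bindings of the residual are all linear (single-use), and can be inlined without changing either the semantics or the asymptotic cost of the residual. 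What remains after inlining is a program whose structure mirrors the original up to fusion of nested calls --- that is, the elimination of intermediate data structures.

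Next I would handle the base case $k = 1$. The level 1 transformer corresponds to positive supercompilation, and Sørensen's result \cite{SORENSEN94A} (already cited in the introduction) states that without CSE, positive supercompilation yields only linear speedups. This exactly matches the conclusion of the inlining argument above. For the inductive step, assume the claim at level $k$. The level $k+1$ transformer operates on process trees produced at level $k$: it compares them by $\cong$ and $\preceq$, and generalises them using $\sqcap$. Without CSE at level $k+1$, the same single-occurrence argument shows that no sub-computation is shared in the residual; the folding/generalisation structure is then analogous to that of level 1, but applied to level $k$ process trees. Since by the induction hypothesis the level $k$ process trees already encode computations no more than linearly faster than the original, and the level $k+1$ transformation introduces no further sharing, the composed speedup remains at most polynomial --- and in particular never exponential.

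The main obstacle will be the inductive step, specifically justifying that the linear-speedup bound composes cleanly across hierarchy levels. This requires an operational cost model that tracks the number of reductions in the original program, in the level $k$ residual, and in the level $k+1$ residual, and shows that the only mechanism by which their ratio can become super-polynomial is the sharing of repeated sub-computations --- the very phenomenon excluded by omitting CSE. A Fibonacci-style doubly-recursive example would clarify both sides of the statement: it would witness that with CSE the exponential tree of recursive calls can collapse to a linear chain, while without CSE the residual retains the original's branching recursion and hence its exponential cost.
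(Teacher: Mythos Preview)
Your proposal is not wrong, but it is a substantial over-engineering of what the paper actually does, and it partly conflates this theorem with the next one.

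The paper's proof of this theorem is two sentences: an exponential speedup can only arise if the \emph{number of recursive calls} of some function is reduced, and the only mechanism in the transformer that can identify and merge duplicate recursive calls is the CSE rewrite rule in $\sqcap$. That is the entire argument. It is a qualitative observation about the shape of the recursion tree, not a quantitative bound on the speedup factor, and it requires no induction on the hierarchy level.

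What you are sketching---an induction on $k$ that bounds the achievable speedup without CSE by a polynomial in $n$---is essentially the paper's proof of the \emph{next} theorem (Non-Exponential Speedups), which states that without CSE the maximum speedup at level $k$ is $O(n^{k-1})$. Your base case citing S{\o}rensen's linear-speedup result and your inductive composition of per-level linear factors mirror that later proof closely. So your plan would succeed, but it proves a strictly stronger statement than what is asked here, and the part you flag as the ``main obstacle'' (a clean cost model for composing the bounds across levels) is real work that the present theorem simply does not need.

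One small point worth keeping from your write-up: your observation that, absent the CSE rewrite, every generalisation variable occurs exactly once in $t_g$ and hence the residual $\mathbf{let}$s are linear is a clean way to make precise \emph{why} CSE is the only sharing mechanism. That is the crux of the paper's short argument, stated more carefully than the paper bothers to.
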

\begin{proof}
An exponential speedup can only be obtained if the number of recursive calls of a function is reduced. This can only happen if some of these recursive calls
are identified by common sub-expression elimination.
\end{proof}
\begin{example}
Consider the following program from \cite{GLUCK16}:
\begin{center}
$\expr{\where{\app{\app{\var{f}}{\var{x}}}{\var{x}}}{\fundef{\app{\app{\var{f}}{\var{x}}}{\var{y}}}{\cas{\var{x}}{\var{Zero}}{\var{y}}{\var{Succ(x)}}{\app{\app{\var{f}}{\brackets{\app{\app{\var{f}}{\var{x}}}{\var{x}}}}}{\brackets{\app{\app{\var{f}}{\var{x}}}{\var{x}}}}}}}}$
\end{center}
This program takes exponential time $O(2^n)$, where $n$ is the size of the variable $x$.
During transformation at level 1 in our hierarchy, the process tree corresponding to $(f~x~x)$ is extracted twice, but then identified by common sub-expression
elimination to obtain the following program:
\begin{center}
$\expr{\where{\app{\var{f'}}{\var{x}}}{\fundef{\app{\var{f'}}{\var{x}}}{\cas{\var{x}}{\var{Zero}}{\var{Zero}}{\var{Succ(x)}}{\app{\var{f'}}{\brackets{\app{\var{f'}}{\var{x}}}}}}}}$
\end{center}
This program takes linear time $O(n)$ on the same input, so an exponential speedup has been achieved. In practice we have found that such exponential
improvements are obtained for very few useful programs; it is very unlikely that a programmer would write such an inefficient program when a much better
solution exists.
\end{example}
We now look at the improvements in efficiency that can be obtained without common sub-expression elimination. 
\begin{theorem}[Non-Exponential Speedups]
\normalfont{Without the use of common sub-expression elimination, the maximum speedup factor possible at level $k>0$ in our hierarchy for input of size $n$
 is $O(n^{k-1})$.}
\end{theorem}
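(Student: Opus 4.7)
The proof will proceed by induction on the hierarchy level $k$. For the base case $k = 1$, we directly invoke the result of S\o{}rensen \cite{SORENSEN94A} cited in the introduction, which establishes that positive supercompilation without common sub-expression identification can only produce a linear (i.e., constant-factor) speedup; this matches $O(n^{k-1}) = O(1)$.

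For the inductive step, we assume the bound $O(n^{k-1})$ for level $k$ and aim to establish the bound $O(n^k)$ at level $k+1$. The key structural fact is visible in rule~(10) of Fig.~\ref{levelk+1rules}: when a function occurs in the redex, the level $k+1$ transformer first runs the level $k$ transformer on the current focus expression and uses the resulting level-$k$ process trees as the units on which folding and generalisation are performed. Thus folding at level $k+1$ occurs exactly when an entire level-$k$ recursive computation is observed to repeat, so each outer recursive loop of the residualised program at level $k+1$ corresponds to coalescing a whole level-$k$ sub-computation.

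Translating this structural observation into a complexity bound requires comparing the residual programs $P_k$ and $P_{k+1}$ produced at the two adjacent levels. The additional collapsing available at level $k+1$ removes at most one extra layer of recursive traversal, so that, without common sub-expression sharing (excluded by hypothesis exactly as in the preceding theorem), each removed layer can save at most a factor of $n$ in evaluation time. Therefore the speedup ratio at level $k+1$ is bounded by $n \cdot O(n^{k-1}) = O(n^k)$, completing the induction.

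The main obstacle I foresee is making the phrase ``removes at most one extra layer of recursive traversal'' fully rigorous. To do so, I would define, for any residualised program $P$, a recursion-depth measure $d(P)$, namely the maximum nesting of recursive function calls through which an evaluation of $P$ on input of size $n$ can pass, and then prove (i) the running time of $P$ on inputs of size $n$ is $\Theta(n^{d(P)})$ in the absence of shared sub-expressions, and (ii) $d(P_{k+1}) \geq d(P_k) - 1$, with equality achieved in the worst case. Fact (ii) follows because level-$(k+1)$ folding operates on level-$k$ process trees and hence can eliminate at most the outermost recursion not already eliminated by level $k$; fact (i) is precisely the assumption that distinct recursive call paths perform independent work, which is exactly what fails when common sub-expressions are identified and shared, explaining why that case must be excluded from the statement.
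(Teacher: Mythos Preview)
Your proposal is correct and mirrors the paper's proof: both proceed by induction on $k$, invoke S{\o}rensen's linear-speedup result for the base case $k=1$, and argue that each additional level contributes at most one further factor of $n$ to the achievable speedup. The paper phrases the inductive step as ``between successive calls of a level-$(k{+}1)$ function there is a constant number of calls to level-$k$-transformed functions, each with speedup $O(n^{k-1})$, hence $O(n^k)$ overall'', which is the same mechanism you describe as removing one extra layer of recursive traversal; your recursion-depth measure $d(P)$ is additional scaffolding the paper does not attempt, its own argument being no more formal than your sketch.
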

\begin{proof}
The proof is by induction on the hierarchy level $k$. For level 1, the proof is as given in \cite{SORENSEN94A};
since there can only be a constant number of reduction steps removed between each successive call of a function, at most a linear speedup is possible.
For level $k+1$, there will be a constant number of calls to functions that were transformed at level $k$ between each successive call of a level $k+1$ function. 
By the inductive hypothesis, the maximum speedup factor for each level $k$ function is $O(n^{k-1})$, so the maximum speedup factor at level $k+1$ is $O(n^k)$.
\end{proof}
\begin{example}
Consider the transformation of the na\"ive reverse program shown in Fig. \ref{example1}, which has $O(n^2)$ runtime. If this program is transformed
at level 1 in our hierarchy, then no improvements are obtained. However, if we transform this program at level 2 in our hierarchy, we end up having to
transform a term equivalent to the following at level 1:
\begin{center}
$\expr{\app{\app{\var{append}}{\brackets{\app{\app{\var{append}}{\var{xs}}}{\brackets{\app{\app{\var{Cons}}{\var{x}}}{\var{Nil}}}}}}}{\brackets{\app{\app{\var{Cons}}{\var{y}}}{\var{Nil}}}}}$
\end{center}
Within this term, the list $xs$ has to be traversed twice. This term is transformed to one equivalent to the following at level 1 (process tree (2) in Fig. \ref{embedding} is
the process tree produced as a result of this transformation):
\begin{center}
$\expr{\app{\app{\var{append}}{\var{xs}}}{\brackets{\app{\app{\var{Cons}}{\var{x}}}{\brackets{\app{\app{\var{Cons}}{\var{y}}}{\var{Nil}}}}}}}$
\end{center}
Within this term, the list $xs$ has only to be traversed once, so a linear speedup has been obtained. This linear improvement will be made between each 
successive call of the na\"ive reverse function, thus giving an overall superlinear speedup and producing the accumulating reverse program as shown in 
Fig. \ref{example1}.
\end{example}
Although it appears that more and more efficiency improvements will be made as we move up our transformation hierarchy, in practice it is found that not many 
efficiency improvements are made beyond level 2, as it is unlikely that a programmer would write programs that are so inefficient that speedup factors greater than $O(n)$ are possible.
\section{Conclusion and Related Work}

We have presented a hierarchy of program transformers in which the transformer at each level of the hierarchy builds on top of those at lower levels.
We have proved that the transformers at each level in the hierarchy terminate, and have characterised the speedups that can be
obtained at each level. Previous works \cite{KOTT80,ANDERSEN92,AMTOFT93,ZHU94,SORENSEN94A} have noted that the unfold/fold transformation 
methodology is incomplete; some programs cannot be synthesised from each other. It is our hope that this work will help to overcome this restriction.

The seminal work corresponding to level 1 in our hierarchy is that of Turchin on supercompilation \cite{TURCHIN86}, although our level 1 transformer more
closely resembles positive supercompilation \cite{SORENSEN96}. 
There have been several previous attempts to move beyond level 1 in our transformation hierarchy, the first one by Turchin himself using {\em walk grammars} 
\cite{TURCHIN93}. In this approach, traces through residual graphs are represented by regular grammars that are subsequently analysed and simplified. 
This approach is also capable of achieving superlinear speedups, but no automatic procedure is defined for it; the outlined heuristics and strategies may not terminate.

A hierarchy of program specialisers is described in \cite{GLUCK98} that shows how programs can be metacoded and then manipulated through a 
{\em metasystem transition}, with a number of these metasystem transitions giving a metasytem hierarchy in which the original program may have several levels of
metacoding. In the work described here, a process tree can be considered to be the metacoding of a program. However, we do not have the difficulties associated 
with metasystem transitions and muli-level metacoding, as our process trees are residualised back to the object level.

Distillation \cite{HAMILTON07A,HAMILTON12A} is built on top of positive supercompilation, so corresponds to level 2 in our hierarchy, but does not go beyond this level.
Klyuchnikov and Romanenko \cite{KLYUCHNIKOV10B} construct a hierarchy of supercompilers  in which lower level supercompilers are used
to prove lemmas about term equivalences, and higher level supercompilers utilise these lemmas by rewriting
according to the term equivalences (similar to the ``second order replacement method'' defined by Kott \cite{KOTT85}).
Transformers in this hierarchy are capable of similar speedups to those in our hierarchy, but no automatic procedure is defined for it; the need 
to find and apply appropriate lemmas introduces infinite branching into the search space, and various heuristics have to be used to try to limit this search. 
Preliminary work on the hierarchy of transformers defined here was presented in \cite{HAMILTON12B}; this did not include analysis of the efficiency improvements
that can be made at each level in the hierarchy. The work described here is a lot further developed than that described in \cite{HAMILTON12B},
and we hope simpler and easier to follow.

Logic program transformation is closely related, and the equivalence of partial deduction and driving (as used in supercompilation)
has been argued by Gl{\"u}ck and S{\o}rensen \cite{GLUCK94}. Superlinear speedups can  be
achieved in logic program transformation by {\em goal replacement} \cite{PETTEROSSI04,ROYCHOUDHURI04}: replacing one logical 
clause with another to facilitate folding. Techniques similar to the notion of  ``higher level supercompilation'' \cite{KLYUCHNIKOV10B}
have been used to prove correctness of goal replacement, but have similar problems regarding the search for appropriate lemmas. 

\section*{Acknowledgements}

This work owes a lot to the input of Neil D. Jones, who provided many useful insights and ideas on the subject
matter presented here.
\bibliographystyle{splncs04}
\bibliography{mybib}

\end{document}